\def\arXiv#1{arXiv:\href{http://arXiv.org/abs/#1}{#1}}
\def\MR#1{MR\href{http://www.ams.org/mathscinet-getitem?mr=#1}{#1}}
\newtheorem{conjec}[lemma]{Conjecture}
\definecolor{ao}{rgb}{0.0, 0.5, 0.0}
\definecolor{ID}{rgb}{0.5, 0.5, 0.5}
\definecolor{nd}{rgb}{0.5, 0.5, 0.5}
\def\RR{\rm \hbox{I\kern-.2em\hbox{R}}}
\def\NN{\rm \hbox{I\kern-.2em\hbox{N}}}
\def\ZZ{\rm {{\rm Z}\kern-.28em{\rm Z}}}
\def\CC{\rm \hbox{C\kern -.5em {\raise .32ex \hbox{$\scriptscriptstyle
|$}}\kern
-.22em{\raise .6ex \hbox{$\scriptscriptstyle |$}}\kern .4em}}
\def\<{\langle}
\def\>{\rangle}
\def\Chi{\raise .3ex
\hbox{\large $\chi$}} 
\def\lsima{\hbox{\kern -.6em\raisebox{-1ex}{$~\stackrel{\textstyle<}{\sim}~$}}\kern -.4em}
\def\lsim{\hbox{\kern -.2em\raisebox{-1ex}{$~\stackrel{\textstyle<}{\sim}~$}}\kern -.2em}
\def\[{\Bigl [}
\def\]{\Bigr ]}
\def\({\Bigl (}
\def\){\Bigr )}
\def\[{\Bigl [}
\def\]{\Bigr ]}
\def\({\Bigl (}
\def\){\Bigr )}
\def\R{\mathbb{R}}
\def\T{{\relax\ifmmode I\!\!\hspace{-1pt}T\else$I\!\!\hspace{-1pt}T$\fi}}
\def\C{\mathbb{C}}
\def\lsim{\raisebox{-1ex}{$~\stackrel{\textstyle<}{\sim}~$}}
  \def\NN{N}                  % parameter M, N in [DKU]
\newcommand{\be}{\begin{equation}}
\newcommand{\ee}{\end{equation}}
\newcommand{\bea}{$$ \begin{array}{lll}}
\newcommand{\eea}{\end{array} $$}
\def \exp{\mathop{\rm    exp}}
\newcommand{\beqn}{\begin{equation}}
\newcommand{\eeqn}{\end{equation}}
\def\endproof{\hfill\rule{1.5mm}{1.5mm}\\[2mm]}
\def\int{\intop\limits}
\def\C{\mathbb C}
\def\CP{\mathbb {CP}}
\def\F{\mathbb F}
\def\FP{\mathbb {FP}}
\def\R{\mathbb R}
\def\S{\mathbb S}
\def\theta{\vartheta}
\begin{document}
%
% paper title
% Titles are generally capitalized except for words such as a, an, and, as,
% at, but, by, for, in, nor, of, on, or, the, to and up, which are usually
% not capitalized unless they are the first or last word of the title.
% Linebreaks \\ can be used within to get better formatting as desired.
% Do not put math or special symbols in the title.
\title{Grassmannian packings: Trust-region stochastic tuning for matrix incoherence}
%
%
% author names and IEEE memberships
% note positions of commas and nonbreaking spaces ( ~ ) LaTeX will not break
% a structure at a ~ so this keeps an author's name from being broken across
% two lines.
% use \thanks{} to gain access to the first footnote area
% a separate \thanks must be used for each paragraph as LaTeX2e's \thanks
% was not built to handle multiple paragraphs
%

\author{Josiah~Park,
        Carlos~Saltijeral,
        and~Ming~Zhong \
$${\tt j.park@tamu.edu,\ csaltijeral@tamu.edu,\ mingzhong@tamu.edu}$$ 
        % <-this % stops a space
\thanks{This research was supported by NSF Tripods Grant CCF-1934904 (JP).}% <-this % stops a space
}

\maketitle

% As a general rule, do not put math, special symbols or citations
% in the abstract or keywords.
\begin{abstract}
  We provide a new numerical procedure for constructing low coherence matrices, Trust-Region Stochastic Tuning for Matrix Incoherence (TRSTMI) and detail the results of experiments with a CPU/GPU parallelized implementation of this method. These trials suggest the superiority of this approach over other existing methods when the size of the matrix is large. We also present new conjectures on optimal complex matrices motivated and guided by the experimental results.

\end{abstract}

% Note that keywords are not normally used for peerreview papers.
\begin{IEEEkeywords}
Equiangular tight frames, line packings, complex projective codes, discrete geometry, manifold optimization, parallelization, trust region method, MIMO.
\end{IEEEkeywords}

% For peer review papers, you can put extra information on the cover
% page as needed:
% \ifCLASSOPTIONpeerreview
% \begin{center} \bfseries EDICS Category: 3-BBND \end{center}
% \fi
%
% For peerreview papers, this IEEEtran command inserts a page break and
% creates the second title. It will be ignored for other modes.
\IEEEpeerreviewmaketitle

%\section{Introduction}
% The very first letter is a 2 line initial drop letter followed
% by the rest of the first word in caps.
% 
% form to use if the first word consists of a single letter:
% \IEEEPARstart{A}{demo} file is ....
% 
% form to use if you need the single drop letter followed by
% normal text (unknown if ever used by IEEE):
% \IEEEPARstart{A}{}demo file is ....
% 
% Some journals put the first two words in caps:
% \IEEEPARstart{T}{his demo} file is ....
% 
% Here we have the typical use of a "T" for an initial drop letter
% and "HIS" in caps to complete the first word.

\section{Introduction}

Structured point configurations have generated much interest in recent years in connection with their role in applications to coding theory, wireless beam-forming, and compressed sensing. Some of these configurations, like packings which have maximal separation of elements, arise naturally in communications due their optimal properties for quantization. Multiple-Input Multiple-Output (MIMO) wireless communication is one of the main areas where such configurations (called {\it Grassmannian constellations}) have gained interest due to desirable statistical properties for signal transmission and other domains \cite{ARU,hmr,LHS,TSR,ZWG}.

An optimal packing of lines or subspaces in $\F^d$ ($\F=\R,\C$, or $\mathbb{H}$, however we focus here on $\F=\C$) satisfies that the minimal pairwise {\it chordal distance} is maximized over all choices of collections of $k$-dimensional subspaces in $ \F^{d}$, $\pmb{\Phi}=\{\pmb{\varphi}_{j}\}_{j=1}^N$. The difficulty in designing such codebooks is demonstrated by the extensive literature (see \cite{bgm,cbs,MD,YRP} for instance) addressing attempts to numerically and algebraically construct optimal and near-optimal packings. Conway, Hardin, and Sloane pioneered numerical searches for real subspace packings \cite{CHS} hosting an online database of these configurations \cite{Slo}. 

 For simplicity, we address here only the case that $k=1$, corresponding to the problem of packing lines. In this case, lines may be identified with points in the projective space $\FP^{d-1}$, and so we will call such packings {\it optimal projective codes}. Any line through the origin in $\F^d$, 
\begin{equation*}
    \label{eq:linemodel}
    x\F=\{x\lambda\ |\ \lambda\in \F\setminus\{0\}\}. 
\end{equation*}
 can be identified with a unit vector in $\F^{d}$ (one interesting fact is that line packings in $\mathbb{C}\mathbb{P}^1$ under this identification are in one-to-one correspondence with optimally spread collections of points on $\mathbb{S}^2$, we use this later). Through this correspondence, the chordal distance between points $ x,y \in \FP^{d-1} $ (lines in $\F^d$), has a simple formula
\begin{equation*}
\rho(x,y)=\sqrt{1-|\langle x,y \rangle|^2}.
\end{equation*}

 The problem of maximizing the minimal chordal distance is then equivalent to minimizing the {\it coherence} of a configuration, the maximal absolute value in the off-diagonal elements of the Gram matrix $\pmb{\Phi}^{*}\pmb{\Phi}$ of a set of unit vectors,  $\pmb{\Phi}=\{\pmb{\varphi}_{j}\}_{j=1}^N$, $|\pmb{\varphi}_{k}|=1$.  We use $\mu(\pmb{\Phi})$ to denote this quantity and $\S_{\F}^{d-1}$ to denote the unit sphere in $\F^{d}$. In searching for optimal codes then, the goal is to find the optimal values
$$\mu_{N,d}=\min\limits_{\pmb{\Phi}\subset \S_{\F}^{d-1}:\ |\pmb{\Phi}|=N}\max_{j\neq k}\  |\langle \pmb{\varphi}_{j}, \pmb{\varphi}_{k} \rangle|.$$

\section{Welch bound and equiangular lines}

One unusually symmetric type of packing of lines play a special role in the line packing problems. These equally spaced projective codes (which additionally must satisfy some conditions outlined below) are called {\it equiangular tight frames} (ETFs), {\it optimal simplices}, or {\it equiangular lines} \cite{LS}. The vectors in an ETF form a {\it tight frame}, and attain the {\it Welch bound}, a lower bound on the coherence of a configuration (detailed below) \cite{Wel}. When a tight frame is represented by a $d\times N$ matrix $\pmb{\Phi}$ over $\F$, it satisfies the property that the composition, $\pmb{\Phi\Phi}^{*}$ is a constant times the $d\times d$ identity matrix. Tight frames have similar properties to orthonormal bases, but are additionally overcomplete, or redundant. 

ETFs are precisely the systems of unit vectors which attain equality in the Welch bound, a lower bound on the minimal coherence of any code in $\FP^{d-1}$, 
$$\mu(\Phi)\geq \sqrt{\frac{N-d}{d(N-1)}}.$$
The size of an ETF cannot become too large in connection with Gerzon's bound, which says that this size is bounded above by the dimension of a corresponding space of symmetric matrices, $N\leq \binom{d+1}{2}, d^2$, or $2d^2-d$ in the real, complex, or quaternionic cases respectively. 

It is generally an open question for which dimensions maximally sized equiangular lines exist. Zauner's conjecture posits that in the complex case these configurations always exist, and this has been verified up to dimension $21$, while numerical evidence suggests the conjecture holds at least up to dimension $121$ \cite{Appl,Zau}. In the real case, a variety of methods show that equality in Gerzon's bound cannot hold in several dimensions, and it is known generally that $d+2$ must be the square of an odd integer if equality holds \cite{BMV}. In the quaternion case, there is a lack of numerical evidence for a similar phenomena which appears in the complex setting \cite{CKM}.

It is worth mentioning there is a type of dual construction, in which any ETF of size $N$ in $\F^{d}$ gives rise to another ETF of equal size in $\F^{N-d}$. One may check that by completing the $d\times N$ matrix to an appropriately scaled unitary matrix gives rise to another ETF. This is known as Gale duality and the new ETF is referred to as the Naimark complement of the first.

The Welch bound is only the first among several other lower bounds for the coherence of $N$ lines in $\F^d$. These include the orthoplex bound, the Levenshtein bound, and the Bukh-Cox bound \cite{BC}. It happens, although rarely when compared to Welch bound equality, that these bounds are met too for configurations, for instance by mutually unbiased bases and more rarely, the tight projective designs. For examples of the latter, take the tight projective designs of size $40$ and $126$ in $\CP^3$ and $\CP^5$ \cite{DGS,hoggar1982t}.

\section{Grassmannian packings and MISO}

By exploiting radio channel information MIMO allows for sending and receiving data simultaneously via multiple transmitting and receiving antennas. Complex optimal projective codes (also called Grassmannian packings) give good {\it beamforming codebooks} in MIMO but we focus below only on the specific case of MISO (Multiple-Input Single-Output) systems here due the result of \cite{LHS} which shows that codebook design for single stream transmission is independent of the number of receive antennas. 

The typical MISO system with $d$ transmit antennas and a single receive antenna can be modeled with a channel vector $\pmb{h}=[h_1,...,h_{d}]$ with independently and identically distributed (i.i.d.) complex Gaussian entries with zero mean and unit variance $\pmb{h}\sim\mathcal{N}_{\mathbb{C}}(\bf{0},\pmb{I}_{d})$. The channel vector captures transmission conditions (like physical obstructions in the path between antennas for instance) which result in the form of the received signal. The channel is often assumed to be static over a small time period (called the {\it coherence time}).

We transmit a symbol or bit $s$ over the channel using what is called a {\it beamforming vector} $\pmb{\varphi}$ where $\pmb{\varphi}=[\varphi^1,\varphi^2,\dots,\varphi^{d}]$ is a vector in $\mathbb{C}^{d}$ with unit norm. Altogether this gives us a simple model for the transmission of the symbol $s$ to the receiver $$y=\langle \pmb{\varphi},\pmb{h}s \rangle+\eta$$ where $\eta$ is complex Gaussian noise of mean zero and variance $\sigma$, and each of the $d$ entries of $\pmb{\varphi}^{\dagger}s$ are transmitted at the same time by the $d$ input antennas. 

A well studied method for choosing the beamforming vector in the above transmission model assumes that a codebook of potential beamforming vectors is agreed upon first and accessible to the transmitter and receiver $$\pmb{\Phi}=[\pmb{\varphi}_1,\pmb{\varphi}_2,\dots,\pmb{\varphi}_{N}].$$ Further it is assumed the receiver can measure in an error-free manner $$\max\limits_{i=1,\dots,N} |\langle \pmb{\varphi}_{i}, \pmb{h} \rangle|^2$$ and may select a $\pmb{\varphi}_i$ which maximizes this quantity, relaying the relevant index to the transmitter before it sends the symbol $s$. 

At this point the transmitter uses the code word $\pmb{\varphi}_i$ to transmit the symbol $s$. This choice maximizes the signal-to-noise ratio (SNR) within a small time interval of the receiver's measurements  $$\gamma=\frac{|\langle \pmb{\varphi},\pmb{h}\rangle|^2 E_{s}}{\sigma}$$

\noindent where $E_s=\mathbb{E}\left[|s|^2\right]$ is the average symbol error. Note that the value of $s$ can now be trivially estimated from $y$ via dividing by $\langle \pmb{\varphi},\pmb{h} \rangle $ \cite{ZWG}.

A good beamforming codebook in this setting unsurprisingly is found by taking $\pmb{\Phi}$ to be a complex line packing \cite{LHS}. In other words if $\pmb{\Phi}$ minimizes coherence over all systems of $N$ unit norm vectors in $\mathbb{C}^{d}$ (or even approximately) then it serves as a good set of vectors for the MISO transmission application above. 

To see this, let $d_{c}$ be the chordal distance, and $h\sim \sigma_{\mathbb{CP}^{d-1}}$ be uniformly distributed on the space of complex lines. The quantization scheme that corresponds to the optimal choice of $\pmb{\varphi}_i$ above then takes the form 
$$ \mathcal{Q}_{\pmb{\Phi}}(\pmb{h}):\pmb{h}\mapsto \arg\max\limits_{i=1,\dots,N} |\langle \pmb{\varphi}_{i}, \pmb{h} \rangle|^2=\arg\min\limits_{i=1,\dots,N} d_{c}(\pmb{\varphi}_{i}, \pmb{h})$$
so that a codebook $\pmb{\Phi}$ has distortion metric 
$$\mathcal{D}(\pmb{\Phi})=\mathbb{E}[d_c^2(\sigma_{\mathbb{CP}^{d-1}},\mathcal{Q}_{\pmb{\Phi}}(\sigma_{\mathbb{CP}^{d-1}}))].$$

This distortion is hard to express generally. One of the main results of \cite{LHS} is that this distortion measure may be bounded by an increasing function of the coherence of the codebook $\pmb{\Phi}$ and so as a proxy for designing a codebook which minimizes the distortion, we find one minimizing coherence.

\section{Optimization procedure}

We now introduce a new optimization procedure we use to optimize for low coherence matrices or complex Grassmannian packings which we call Trust-Region Stochastic Tuning for Matrix Incoherence (TRSTMI). This procedure uses an alternating method of computing successive smooth approximations to the maximum function (log-sum-exp) coupled with an unconstrained minimization procedure (trust-region conjugate gradient method, a popular nonlinear optimization method \cite{nowr}) used to minimize the smooth maximum function approximation evaluated on the absolute value squared of inner products of a unit norm projection of a given matrix. A random Monte-Carlo sampling procedure is additionally employed to address the fact that many trials may reach a locally optimal value of the coherence which is far from the global minimum.

The maximum approximation mentioned above takes the form $$F_{s,\delta}(\pmb{x})=s+\delta\log{\left[\sum_{i=1}^n \exp{\left(\frac{x_i-s}{\delta}\right)}\right]}$$ where $s=\max\limits_{i} x_i$ and $\delta>0$ is small. The method of optimizing successive approximations of the max function evaluated on inner products was investigated also in \cite{ARU}.  The reference \cite{ZB} gives a good survey of other numerical approaches for contructing line packings.

We choose a series of decreasing $\delta_{k}'s$ (which are experimentally chosen based on performance and guided by elementary considerations) and feed the optimized matrix $\pmb{x}'$ which comes out of minimization of $F_{s,\delta_{k}}(\pmb{x})$ into a minimization of the next approximation $F_{s,\delta_{k+1}}(\pmb{x}')$. Eventually we reach a terminally small value of $\delta$ which decreasing further tends not to result in any considerate improvement in the optima quality (for similar time spent). The variable $\pmb{x}$ here corresponds to the upper triangular part of the absolute value of the matrix of inner products obtained after normalizing the columns of a matrix to be unit norm (it suffices to only compute the maximum of the upper-triangular part by symmetry).

\begin{algorithm}\caption{Trust-Region CG-Steihaug  \cite{nowr}[Alg. 7.2]}
	\label{alg:CG}
	\alglanguage{pseudocode} 
	\begin{algorithmic}
		\State $z_0 \gets 0,\ r_0 \gets \nabla{f}_k,\  d_0 \gets -r_0=-\nabla f_k$
		\If{$\|r_0\|<\epsilon_k$}
		\Return{$p_k\gets z_0=0$}
		\For{$j=0,1,2,...$}
		\If{$d_{j}^TB_kd_j \leq 0$} 
		\State{Find $\tau$ s.t. $p_{k}=z_j+\tau d_j$ minimizes (\textdagger)} 
		\State \Return{$p_k$;}
		\EndIf
		\State $\alpha_{j} \gets r_j^{T}r_{j}/d_{j}^{T}B_kd_{j}$
		\State $z_{j+1} \gets z_j+\alpha_j d_j$
		\If{$\|z_j\| \geq \Delta_k$}
		\State Find $\tau\geq 0$ s.t. $p_{k}=z_j+\tau d_j$ and $\| p_k \|=  \Delta_k$
		\State \Return{$p_k$;}
		\EndIf
		\State $r_{j+1} \gets r_{j}+\alpha_j B_k d_j$
		\If{$\|r_j \| < \epsilon_k$}
		\State \Return{$p_k \gets z_{j+1}$};
		\EndIf
		\EndFor
		\EndIf
		\State (\textdagger): $\min_{p\in\mathbb{R}^n} m_k(p)=f_k+\nabla f_k^{T}p+\frac{1}{2} p^{T}B_k p$ subject to \\ $\|p\|\leq \Delta_k$, where $B_k=\nabla^2 f_k$
		\end{algorithmic}
\end{algorithm}

We now show how an elementary inequalities suggest a choice of $\delta$ which gives good approximation to the maximum function.

\begin{proposition} Let $F_{s,\delta}(\pmb{x})=s+\delta\log{\left[\sum_{i=1}^n \exp{\left(\frac{x_i-s}{\delta}\right)}\right]}$, where $s=\max\limits_{i} x_i$. Then, $$s\leq F_{s,\delta}(\pmb{x})\leq s+\delta\log{n}.$$
\end{proposition}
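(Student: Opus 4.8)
The plan is to exploit the single structural fact that drives everything: since $s=\max_i x_i$, every exponent satisfies $x_i-s\le 0$, with equality for at least one index $i^\ast$ (the maximizer). Both inequalities then follow from sandwiching the sum inside the logarithm between $1$ and $n$, using only the monotonicity of $\exp$ and $\log$ together with $\delta>0$. No calculus or convexity is needed.

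First I would handle the lower bound. Because the maximizing index $i^\ast$ contributes the term $\exp\!\left(\tfrac{x_{i^\ast}-s}{\delta}\right)=\exp(0)=1$, and every other term is nonnegative, the sum obeys
$$\sum_{i=1}^n \exp\!\left(\frac{x_i-s}{\delta}\right)\ \ge\ 1.$$
Applying $\log$ (which is increasing) gives $\log\!\big[\sum_i \exp(\cdot)\big]\ge 0$, and multiplying by $\delta>0$ and adding $s$ yields $F_{s,\delta}(\mathbf{x})\ge s$.

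For the upper bound I would use that $x_i-s\le 0$ forces each summand $\exp\!\left(\tfrac{x_i-s}{\delta}\right)\le \exp(0)=1$, so that the sum of the $n$ terms is at most $n$:
$$\sum_{i=1}^n \exp\!\left(\frac{x_i-s}{\delta}\right)\ \le\ n.$$
Again applying the increasing function $\log$ gives $\log\!\big[\sum_i \exp(\cdot)\big]\le \log n$, and multiplying by $\delta>0$ and adding $s$ yields $F_{s,\delta}(\mathbf{x})\le s+\delta\log n$. Combining the two displays completes the proof.

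There is no serious obstacle here; the argument is entirely elementary. The only point requiring a little care is the lower bound, where one must invoke that the maximum is actually \emph{attained}, so that at least one exponential equals exactly $1$ and the sum cannot drop below $1$. Everywhere else it suffices to note that $\delta>0$ preserves the direction of both inequalities after multiplication, and that $\log$ is monotone increasing on the positive reals.
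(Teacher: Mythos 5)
Your proof is correct and follows essentially the same route as the paper's: the lower bound comes from noting that the maximizer's term contributes $\exp(0)=1$ to the sum, and the upper bound comes from bounding every summand by $\exp(0)=1$ (the paper phrases this as $\sum_i \exp(y_i)\leq n\cdot\exp(\max_i y_i)$ with $\max_i y_i=0$, which is the same estimate). Nothing to add; your attention to the fact that the maximum is attained is the same point the paper uses implicitly.
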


\begin{proof}
The first (left) inequality is trivial since $\delta>0$ and $$\sum_{i=1}^n \exp\left(\frac{x_i-s}{\delta}\right) \geq \exp\left(\frac{s-s}{\delta}\right)=1.$$ Let $y_i=\frac{x_i-s}{\delta}$, then $$\log\left[\sum_{i=1}^n \exp(y_i)\right]\leq \log\left[n\cdot\exp(\max\limits_{i} y_i)\right].$$ Thus, since $\max\limits_{i} y_i=0,$ we have  $$F_{s,\delta}(\pmb{x})=s+\delta\log\left[\sum_{i=1}^n \exp\left(\frac{x_i-s}{\delta}\right)\right]\leq s+\delta\log{n}.$$
\end{proof}

\begin{algorithm}\caption{TRSTMI optimization steps}
\begin{enumerate}
    \item generate random collection of $N$ unit vectors. 
    \item run optimization procedure
    \begin{enumerate}
        \item approximate max function with log-sum-exp approximation $F_{s,\delta_{k}}(\pmb{x})$ (with $\delta_k\rightarrow 0$ decreasing sequence).
        \item apply trust-region conjugate gradient method to minimize $F_{s,\delta_k}(\pmb{x})$ where $\pmb{x}$ is the vectorized off-diagonal Gram matrix entries (corresponding to the unit vectors).
        \item initialize optimization of $F_{s,\delta_{k+1}}$ with the optimizer of $F_{s,\delta_k}(\pmb{x})$.
        \item repeat for all values of $\delta_{k}$, $k=1,...,M$.
    \end{enumerate}
    \item repeat procedure for $L$ random initializations and return the minimum over these trials.
\end{enumerate}
\end{algorithm}

\begin{table}[h] 
    \begin{center}
        \caption{A comparison of time needed to optimize for a single step (a single approximation of max function) for different large values of $(d,N)$ for CPU and GPU parallelized procedures. The GPU runs were completed on an Nvidia A100 80 GB RAM graphics processing unit. The CPU runs were done on either a single or 6 parallel cores on a Ryzen 5 3600 CPU. }
        \label{tab:gpucpu}
        \begin{tabular}{ c  c  c  c  c  } 
            $d$  & $N$ & GPU $(s)$ & CPU - 6 cores $(s)$ & CPU - 1 core $(s)$ \vspace{1 mm} \\ \hline 
\rule{0pt}{3ex} 

$4$ & $18000$ & $1161$ & $2927$ & $4075$   \\

\ $5$ & $18000$ & $1918$  & $3896$ & $5661$   \\

\ $6$ & $18000$ & $3400$  & $3641$ & $10128$  \\ 
   
        \end{tabular}
    \end{center}
\end{table} 

\begin{table}[h] 
    \begin{center}
        \caption{A comparison of time needed to optimize for different large values of $(d,N)$ for different procedures (all on CPU). The TRSTMI trials and MATLAB implementations of Alt-Proj and CBGC were run on an i7-8750H processor. Values are rounded up at fourth decimal to give actual upper bounds (for example $.50001$ would be rounded to $.5001$ in the table).}
        \label{tab:comp}
        \begin{tabular}{ c  c  c  c  c} 
            $d$  & $N$ & TRSTMI $(s)$ & Alt-Proj $(s)$ &  CBGC $(s)$  \vspace{1 mm} \\
            $$  & $$ & ($\mu$)  & ($\mu$) & ($\mu$)  \vspace{1 mm} \\ \hline
\rule{0pt}{3ex} 

$2$ & $4$ & $2.36$ & $0.613$  & $0.0156$ \\
 &  & $(0.5774)$ & $(0.5774)$ &  $(0.5774)$ \\

\ $3$ & $9$ & $13.1$  & $1.20$ &  $4.34$  \\
 &  & $(0.5001)$  & $(0.5001)$  &  $(0.5001)$  \\

\ $4$ & $16$ & $21.3$  & $2.73$ &  $0.301$ \\ 
 &  & $(0.4473)$  & $(0.4473)$  &  $(0.4473)$  \\

\ $5$ & $25$ & $39.3$  & $10.9$  &  $0.644$ \\
 &  & $(0.4083)$  & $(0.4083)$   &  $(0.4083)$ \\

\ $6$ & $36$ & $79.7$  & $19.5$  & $1.37$ \\ 
 &  & $(0.3780)$  & $(0.3780)$  & $(.3780)$ \\

\ $7$ & $49$ & $146$  & $25.6$ & $859$  \\ 
 &  & $(0.3536)$  & $(0.4545)$  & $(.3848)$ \\

\ $8$ & $64$ & $189$ & $48$  & $1837$ \\
& & $(0.3705)$ & $(.4782)$  & $(0.3693)$  \\

\ $9$ & $81$ & $388$ & $61.1$  & $-$ \\
& & $(0.3575)$ & $(.4648)$ & $-$ \\
   
        \end{tabular}
    \end{center}
\end{table}

\begin{remark} In our case $\pmb{x}$ is the vectorized form of the upper part of a square $N\times N$ matrix and hence $n=\frac{N(N-1)}{2}$. Hence to get $F_{s,\delta}$ within $\epsilon$ of the value of the max function we may take $\delta\approx \frac{\epsilon}{2\log{N}}$. 
\end{remark}

\begin{remark}

Without the shift by $s$ within the exponential function, evaluation of the inner function will `blow up' for small $\delta$. We handle differentiation of nonsmooth $s$ via the subgradient method \cite{Nes}, so that in our gradient based minimization procedure we compute a subgradient as a proxy for the gradient of the non-smooth function $\max\limits_{i} x_i$. 
\end{remark}

\begin{remark} We are able to use an unconstrained minimization technique because we divide each column of the $d\times N$ matrix representing configuration $\pmb{\Phi}$ by the norm of the associated column vectors (precluding a zero vector column from our matrix). One way to improve performance might be to use another representation of the unit norm columns (for instance using manifold optimization such as in \cite{cbs}). 
\end{remark}

\begin{remark} We implement TRSTMI in PyTorch\footnote{Github repository with our code: \url{https://github.com/JosiahPark/TRSTMI}.} \cite{PGM} so as to allow for straightforward parallelization support on CPU/GPU. We found that this led to notable decreases in runtime for large problems and detail the result of some experiments in Table~\ref{tab:gpucpu}. In Table~\ref{tab:comp} we compare the runtime and performance of TRSTMI to other established methods for constructing complex line packings \cite{TDH,DL} (because these other methods are implemented in MATLAB instead, some care should be applied in comparing runtimes). In these experiments we stuck to $L=2d^2$ random initializations with TRSTMI. We note that CBGC is an energy based optimization procedure and that some properties of minimizers of associated energies are derived in \cite{gp}.
\end{remark}

\begin{remark} One approach which has been successful for finding high precision optimal spherical codes (best packings of spherical caps on the surface of a sphere) involves successive minimization of harmonic energies \cite{Wang1,Wang2}, an adjustment of an algorithm from \cite{kottwitz}. We find this procedure is very time consuming for large problems, but expect that as a (likely expensive) post-processing step, a singular energy optimization procedure may be applied to the output of TRSTMI to possibly further decrease coherence.
\end{remark}

\begin{remark} In our experiments we found that TRSTMI improved on many existing estimates for $\mu_{N,d}$ from the online database associated with \cite{JKM}. We collected a few that we expect will be harder to improve upon in Table~\ref{tab:comp2}.
\end{remark}

\section{Conjectures from Numerics}

A well-known phenomena that appears when studying best packing problems in compact spaces is that one can obtain other seemingly optimal packings from 'exceptional' packings by removing one or more points. This is not unique to packings of subspaces, but is a phenomena which appears in a variety of geometric packing questions. 

One example of this phenomena can be understood via the {\it orthoplex bound}, a coherence lower bound similar to the Welch bound. By removing points from {\it mutually unbiased bases} (MUBs) the orthoplex bound shows one often obtains an optimal packing.

\begin{proposition}[Orthoplex and Levenshtein bound, \cite{CHS,L}]
 \label{thm:bounds}
Fix $d>1$, select $\Phi \in \mathbb{C}^{d\times N}$ with unit norm columns, and set $\mu = \mu(\Phi)$. Then \\ 
\begin{equation*}
\label{eq:orth}
\ \mu(\Phi) 
\geq \frac{1}{\sqrt{d}},\ \ \textrm{if $n > d^2$,} \\ \vspace{2 mm}
\end{equation*}
\begin{equation*}
\label{eq:Lev}
\ \mu(\Phi) 
\geq \sqrt{\frac{2n-d(d+1)}{(n-d)(d+1)}},\ \  \textrm{if $n>d^2$.} 
\end{equation*}

Equality in the orthoplex bound~\eqref{eq:orth} requires $n \leq 2d^2-1$, and equality in the Levenshtein bound~\eqref{eq:Lev} requires that $\Phi$ be tight with angle set $\{0,\mu^2\}$.

\end{proposition}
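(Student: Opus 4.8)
The plan is to replace each line by its rank-one orthogonal projection $P_j=\varphi_j\varphi_j^{*}$ and to analyze the finite set of squared inner products $a_{jk}=|\langle\varphi_j,\varphi_k\rangle|^2=\operatorname{Tr}(P_jP_k)$, for which $a_{jj}=1$ and $0\le a_{jk}\le\mu^2$ whenever $j\ne k$. The engine for both bounds is Delsarte positivity on the two-point homogeneous space $\mathbb{CP}^{d-1}$: if $Q_\ell$ is the degree-$\ell$ polynomial orthogonal to lower degrees with respect to the measure $(d-1)(1-t)^{d-2}\,dt$ governing $t=|\langle\cdot,\cdot\rangle|^2$ (with moments $\mathbb{E}[t^m]=m!/\bigl(d(d+1)\cdots(d+m-1)\bigr)$), then $Q_\ell$ is positive-definite and hence $\sum_{j,k}Q_\ell(a_{jk})\ge 0$, the sum running over all ordered pairs. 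Orthogonalizing $\{1,t,t^2\}$ gives $Q_1(t)=dt-1$ and $Q_2(t)=t^2-\tfrac{4}{d+2}\,t+\tfrac{2}{(d+1)(d+2)}$; the positivity of $Q_1$ is nothing but the frame-operator inequality $M_1:=\sum_{j,k}a_{jk}=\bigl\|\sum_jP_j\bigr\|_F^2\ge n^2/d$, and the positivity of $Q_2$ is the nonnegativity of the squared norm of the image of $\Phi$ in the degree-two harmonic component of $L^2(\mathbb{CP}^{d-1})$.

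For the orthoplex bound I would center the projections, setting $R_j=P_j-\tfrac1d I$: these are traceless Hermitian, so they lie in a real space of dimension $d^2-1$, and satisfy $\langle R_j,R_k\rangle=a_{jk}-\tfrac1d$ with $\|R_j\|_F^2=1-\tfrac1d>0$. If $\mu^2<1/d$ then $\langle R_j,R_k\rangle<0$ for all $j\ne k$, so the $R_j$ are pairwise strictly obtuse; since $\mathbb{R}^{D}$ admits at most $D+1$ pairwise strictly obtuse vectors, taking $D=d^2-1$ forces $n\le d^2$, contradicting $n>d^2$. Hence $\mu\ge 1/\sqrt d$. For the equality clause, at $\mu=1/\sqrt d$ the inner products are only nonpositive, and the sharp companion statement---at most $2D$ nonzero pairwise non-acute vectors fit in $\mathbb{R}^{D}$---bounds $n$ in terms of $D=d^2-1$; this is the mechanism behind the stated cardinality restriction, with the exact constant (and whether the antipodal coincidences $R_j=-R_k$ needed to saturate it can occur for $d>2$) being the one point that needs careful bookkeeping.

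For the Levenshtein bound the idea is a single degree-two Delsarte step. I would test against $f(t)=t\,(t-\mu^2)$, whose two roots are exactly the conjectured angle values: then $f(a_{jk})\le 0$ on every off-diagonal pair while $f(1)=1-\mu^2>0$, so summing over ordered pairs gives $\sum_{j,k}f(a_{jk})\le n(1-\mu^2)$. For the reverse estimate I would decompose $f=Q_2+\bigl(\tfrac{4}{d+2}-\mu^2\bigr)t-\tfrac{2}{(d+1)(d+2)}$ and sum term by term, using $\sum_{j,k}Q_2(a_{jk})\ge 0$, the identity $\sum_{j,k}1=n^2$, and the bound $M_1\ge n^2/d$ applied to the nonnegative coefficient $\tfrac{4}{d+2}-\mu^2$; this gives $\sum_{j,k}f(a_{jk})\ge\bigl(\tfrac{4}{d+2}-\mu^2\bigr)\tfrac{n^2}{d}-\tfrac{2n^2}{(d+1)(d+2)}$. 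Comparing the two estimates and solving the resulting linear inequality for $\mu^2$ (the factor $d+2$ cancels) returns $\mu^2\ge\frac{2n-d(d+1)}{(n-d)(d+1)}$. Reading equality backwards, $\sum_{j,k}Q_2(a_{jk})=0$ and $M_1=n^2/d$ force $\Phi$ to be a tight frame with vanishing degree-two harmonic image, while $f(a_{jk})=0$ for $j\ne k$ forces $a_{jk}\in\{0,\mu^2\}$---precisely the asserted two-angle tight configuration.

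The hard part will be establishing the degree-two positivity and maintaining the sign discipline. A naive substitution of the monomial moment bounds does not close the argument---there is no a priori upper bound on $M_1$---so the crux is that the \emph{exact} zonal polynomial $Q_2$, rather than $t^2$, fixes the relative weights of the moments so that only $M_1\ge n^2/d$, used with the correct sign, is required. I would therefore spend the effort confirming that $Q_2$ is genuinely positive-definite on $\mathbb{CP}^{d-1}$ and checking the admissibility $\mu^2\le\tfrac{4}{d+2}$ under which its $M_1$-coefficient is nonnegative; since $\tfrac{4}{d+2}$ exceeds the claimed bound for every $n>d$, the complementary case is vacuous, so the inequality persists throughout, with the hypothesis $n>d^2$ marking the range in which it is the operative (non-Welch) bound and in which the orthoplex contradiction $n\le d^2$ is available.
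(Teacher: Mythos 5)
The paper never proves this proposition --- it is imported verbatim from \cite{CHS,L} --- so there is no internal proof to compare against; judged on its own merits, your argument is correct and is essentially the standard proof from those sources. For the orthoplex bound, the traceless embedding $R_j=P_j-\tfrac1d I$ into the $(d^2-1)$-dimensional real space of traceless Hermitian matrices, combined with the classical facts that $\mathbb{R}^D$ admits at most $D+1$ pairwise strictly obtuse vectors and at most $2D$ nonzero pairwise non-acute vectors, gives both the bound and the equality restriction; in fact your count yields $n\le 2(d^2-1)=2d^2-2$, which implies (and slightly sharpens) the stated $n\le 2d^2-1$, and the antipodal caveat you flag resolves cleanly, since $R_j=-R_k$ means $P_j+P_k=\tfrac2d I$, impossible for $d>2$ by rank and forcing orthogonal lines when $d=2$. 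For the Levenshtein bound, your computations check out: $Q_2(t)=t^2-\tfrac{4}{d+2}t+\tfrac{2}{(d+1)(d+2)}$ is the correct degree-two zonal polynomial for the measure $(d-1)(1-t)^{d-2}\,dt$, its positive-definiteness is exactly the Delsarte--Goethals--Seidel addition formula \cite{DGS}, the decomposition of $f(t)=t(t-\mu^2)$ together with $\sum_{j,k}Q_2(a_{jk})\ge 0$ and $M_1\ge n^2/d$ does collapse, after the $(d+2)$ cancellation, to $\mu^2\ge\frac{2n-d(d+1)}{(n-d)(d+1)}$, and you handle the one genuine sign subtlety correctly: since $\tfrac{4}{d+2}$ always exceeds the claimed bound when $n>d$, the case $\mu^2>\tfrac{4}{d+2}$ is vacuous. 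Your equality analysis also matches the stated conclusion, with the minor observation (worth recording) that tightness is genuinely forced because the $M_1$-coefficient $\tfrac{4}{d+2}-\mu^2$ is strictly positive at equality, the Levenshtein value being strictly below $\tfrac{4}{d+2}$ for all $n>d$, $d>1$.
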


Two orthonormal bases in $\CP^{d-1}$ are said to be {\it mutually unbiased}  if they are aligned so that the inner product of any pair of vectors, $\varphi_{j},\varphi_{k}$, each from distinct bases, satisfies
$$|\langle \varphi_{j},\varphi_{k} \rangle|=\frac{1}{\sqrt{d}}.$$ \indent It is known that the maximal number of MUBs in $\CP^{d-1}$ is no larger than $d+1$, and that for prime power dimension, $d=p^k$, it is known this maximum number is attained. Interestingly, it is an open question what is the maximal number of MUBs in dimension $d=6$, where it is conjectured that at most $3$ exist. 

An immediate consequence of the orthoplex bound is that for $d=p^k$, where $p$ prime, any $(d+1)$-MUB gives rise to an optimal projective packing of $N=(d+1)d-j$ points, $j=0,\dots,d-1$ by removal of any $j$ points in the $(d+1)$-MUB. In particular, $\mu_{N,d}=\frac{1}{\sqrt{d}}$ for $d$ and $N$ as above. It is an interesting question of whether such configurations are the unique coherence minimizers.

Our first conjecture suggests a similar phenomena occurs for maximal complex equiangular tight frames (also known as SIC-POVMs \cite{RBSC}).

\begin{conjec} \label{thm:sic} For $d\geq 2$, there exists a SIC of $N=d^2$ points and for each $N=d^2-j$, $j=0,\dots d-2$, each optimal projective code of size $N$ is given by the removal of some $j$ points from the SIC. In particular, $\mu_{N,d}=\frac{1}{\sqrt{d+1}}$ for $d$ and $N$ as above.
\end{conjec}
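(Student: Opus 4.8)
The plan is to separate Conjecture~\ref{thm:sic} into three assertions and attack them in turn: (a) the existence of a SIC (a maximal complex equiangular tight frame) of $N=d^2$ vectors for every $d\geq 2$; (b) the evaluation $\mu_{N,d}=\frac{1}{\sqrt{d+1}}$ for $d^2-d+2\leq N\leq d^2$; and (c) the rigidity statement that every optimal code of such a size is the restriction of a SIC. Assertion (a) is precisely Zauner's conjecture, a well-known open problem verified only in finitely many dimensions \cite{Appl,Zau}, so an unconditional proof is out of reach; the realistic target is to prove (b) and (c) conditional on the existence of SICs (and, for (c), on a completion property). This mirrors the MUB removal phenomenon discussed above, with the orthoplex bound of Proposition~\ref{thm:bounds} replaced by a lower bound valid in the regime $N\leq d^2$.

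For the value in (b), the lower bound is unconditional. A SIC is a complex projective $2$-design, so the Gram data $p_{jk}=|\langle\varphi_j,\varphi_k\rangle|^2$ of any configuration obey a second-moment (design) inequality; pushing this in the regime $N\leq d^2$ is exactly the content of the Bukh--Cox bound \cite{BC}, which I would invoke to conclude $\mu_{N,d}\geq \frac{1}{\sqrt{d+1}}$ precisely when $N\geq d^2-d+2$. The threshold $j\leq d-2$ should coincide with the range in which this bound stays sharp, and verifying that numerology is the first concrete task, since the plain second-moment (relative) bound degrades continuously and only forces $N<d^2$, not the discrete jump to $N\leq d^2-d+1$. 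For the matching upper bound I would use a SIC $\{\varphi_1,\dots,\varphi_{d^2}\}$, for which every off-diagonal value equals $\frac{1}{\sqrt{d+1}}$: deleting any $j$ of the vectors leaves a configuration whose coherence is still exactly $\frac{1}{\sqrt{d+1}}$, as the maximum is over a nonempty set of surviving pairs. Granting (a), this yields $\mu_{N,d}=\frac{1}{\sqrt{d+1}}$, and at $N=d^2$ it also recovers the Welch bound, which evaluates to $\frac{1}{\sqrt{d+1}}$ in that case.

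The rigidity in (c) is where the real work lies, and it splits into two steps. First, an equality analysis: combining the Bukh--Cox bound with the second-moment identities, I would argue that when $N\geq d^2-d+2$ no configuration can hold its maximal inner product at $\frac{1}{\sqrt{d+1}}$ while making the other inner products strictly smaller, forcing the optimizer to be equiangular with $|\langle\varphi_j,\varphi_k\rangle|^2=\frac{1}{d+1}$. Second, a \emph{completion} step: one must prove that such an equiangular $N$-set of lines extends, by adjoining $j$ further unit vectors, to a full equiangular $(d^2)$-set, i.e. to a SIC.

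The completion step is the main obstacle. Extending a sub-maximal equiangular system to a maximal one fails for generic equiangular line systems, so any proof must exploit the rigid algebraic structure peculiar to maximal complex ETFs; in effect one asserts that the only way to place $d^2-d+2$ or more equiangular lines at the SIC angle is inside a genuine SIC. This is at least as delicate as Zauner's conjecture itself, since it presupposes existence and adds uniqueness of the embedding. I would expect to make progress only in the dimensions where exact Weyl--Heisenberg SICs are known, attacking the extension there via the Clifford stabilizer action and a classification of its fixed sub-configurations, while leaving the general $d$ case open alongside (a).
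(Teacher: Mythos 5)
This statement is a \emph{conjecture} in the paper: the authors offer no proof, only numerical evidence from their TRST-MI experiments, the analogy with the MUB/orthoplex phenomenon, and the observation that the conjecture would imply non-existence of complex ETFs of sizes $N=d^2-j$, $j=1,\dots,d-2$ (since the conjectured value $\frac{1}{\sqrt{d+1}}$ strictly exceeds the Welch bound there). Your overall framing --- that an unconditional proof is out of reach because part (a) is Zauner's conjecture, and that the realistic content is a conditional program --- is therefore the right reading, and your upper-bound observation (deleting $j$ points from a SIC leaves coherence exactly $\frac{1}{\sqrt{d+1}}$) is correct and matches the reasoning implicit in the paper.

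However, your part (b) contains a genuine error: you claim the Bukh--Cox bound yields $\mu_{N,d}\geq \frac{1}{\sqrt{d+1}}$ ``precisely when $N\geq d^2-d+2$,'' and that this lower bound is unconditional. It is not. The Bukh--Cox bound \cite{BC} is designed for, and is only competitive in, the regime where $N$ is slightly larger than $d$ (nearly orthogonal vectors); in the range $d^2-d+2\leq N<d^2$ the best known general lower bound remains the Welch bound, which is strictly smaller than $\frac{1}{\sqrt{d+1}}$ for every $N<d^2$. A quick sanity check shows your claim cannot be right: if any known bound gave $\frac{1}{\sqrt{d+1}}$ in that range, it would immediately prove non-existence of complex ETFs of size $d^2-j$ for $1\leq j\leq d-2$, yet the paper points out that the only known non-existence result of this type (beyond Gerzon's bound) is the single case $d=3$, $N=8$. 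So assertion (b) is not a consequence of existing bounds --- it is precisely the open content of the conjecture, on equal footing with the rigidity statement (c) that you correctly identify as hard. Your program thus reduces to three open problems rather than one open problem plus two resolved steps, which is consistent with the statement being a conjecture, but your write-up misrepresents (b) as settled.
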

The first statement on existence is just Zauner's conjecture. However, if the above statement is true, this would imply non-existence of an ETF of size $N$, for $N$ slightly smaller than $d^2$ as above, since 
$$\frac{1}{\sqrt{d+1}}>\sqrt{\frac{d^2-j-d}{d(d^2-j-1)}},$$
holds for any $j\in \{1,\dots,d-2\}$. For comparison, the only non-existence result for complex ETFs (excluding those ruled out by Gerzon's bound) shows for $d=3$ there does not exist an ETF of size $N=8$, a special case of the above conjecture. Conjecture~\ref{thm:sic} also appears for the specific case of $d=3$ in \cite{JKM}.

It is interesting how sharp the threshold is for removing points while still obtaining an optimal packing. We expect that for $d=2,3,4,$ and $5$ the above conjecture gives the right threshold. Besides maximally sized MUBs or SICs there are several other exceptional configurations for which it appears one can remove a point, still obtaining an optimal packing. In increasing dimension, some examples from numerics appear to be the tight $3$-design in $\CP^{3}$ of $40$ points, the $W(K_5)$ line system of $45$ points in $\CP^{4}$, an ETF of $N=31$ points in $\CP^{5}$, and an ETF of $28$ points in $\CP^6$.

We expect that the $126$ point tight $3$-design also has the above property. A newly found highly symmetric, small sized projective $3$-design (described in full in \cite{pa}) of $85$ points in $\CP^4$ is also expected to have this property. In all of the above cases, it is not clear how many points can be removed so as to obtain an optimal packing, nor for what reason seeming differences in this number occur for similar configurations.

One phenomena which is displayed numerically is that small complex grassmannian packings appear to have a single distinct distance between lines for $N$ up to some threshold $m(d)$ (which appears to increase non-trivially with the dimension $d$). Since a one distance set cannot become too large in connection with classical bounds, there is a maximum size $m(d)\leq d^2$ for which there exists a packing of size $N\in[d,m(d)]$ with only a single distinct distance appearing. By default $m(d)\geq d+1$, since the orthonormal basis and the simplex are optimal packings.

From our optimal candidates in dimensions $d=2$ through $8$ and for increasing $N$, there is a value of $N$ where numerically optimal configurations go from having very small differences in the absolute value of their off-diagonal Gram matrix entries to noticeably large differences. We tabulate our observations in this direction in Table \ref{table:md}.

\begin{table}[h]
    \begin{center}
        \caption{Consecutive small values of $N$ for which it appears an optimal packing in $\CP^{d-1}$ of size $N$ can be taken to have a single distance appearing.}
        \begin{tabular}{ c c }
            \label{table:md}
$d$ & $N$ \\ \hline
\rule{0pt}{3ex}2 & 2--3  \\  
3 & 3--4  \\  
4 & 4--8  \\  
5 & 5--13 \\  
6 & 6--16  \\  
7 & 7--19  \\  
8 & 8--22  \\  

\end{tabular}
    \end{center}
\end{table} 

\begin{table}[h] 
    \begin{center}
        \caption{A comparison of select coherence values (rounded up at sixth digit) obtained via TRSTMI to established constructive upper bounds from \cite{JKM}. }
        \label{tab:comp2}
        \begin{tabular}{ c  c  c  c   } 
            $d$  & $N$ & TRSTMI coh. & \cite{JKM} database coh.  \vspace{1 mm} \\ \hline
\rule{0pt}{3ex} 

3 & 27 & 0.734233 & 0.73726116 \\
\ 3 & 28 & 0.737797 & 0.73884638 \\
\ 6 & 24 & 0.371529 & 0.37267800 \\

        \end{tabular}
    \end{center}
\end{table}

\begin{figure}[H]
{\centering
\includegraphics[width=.35\textwidth]{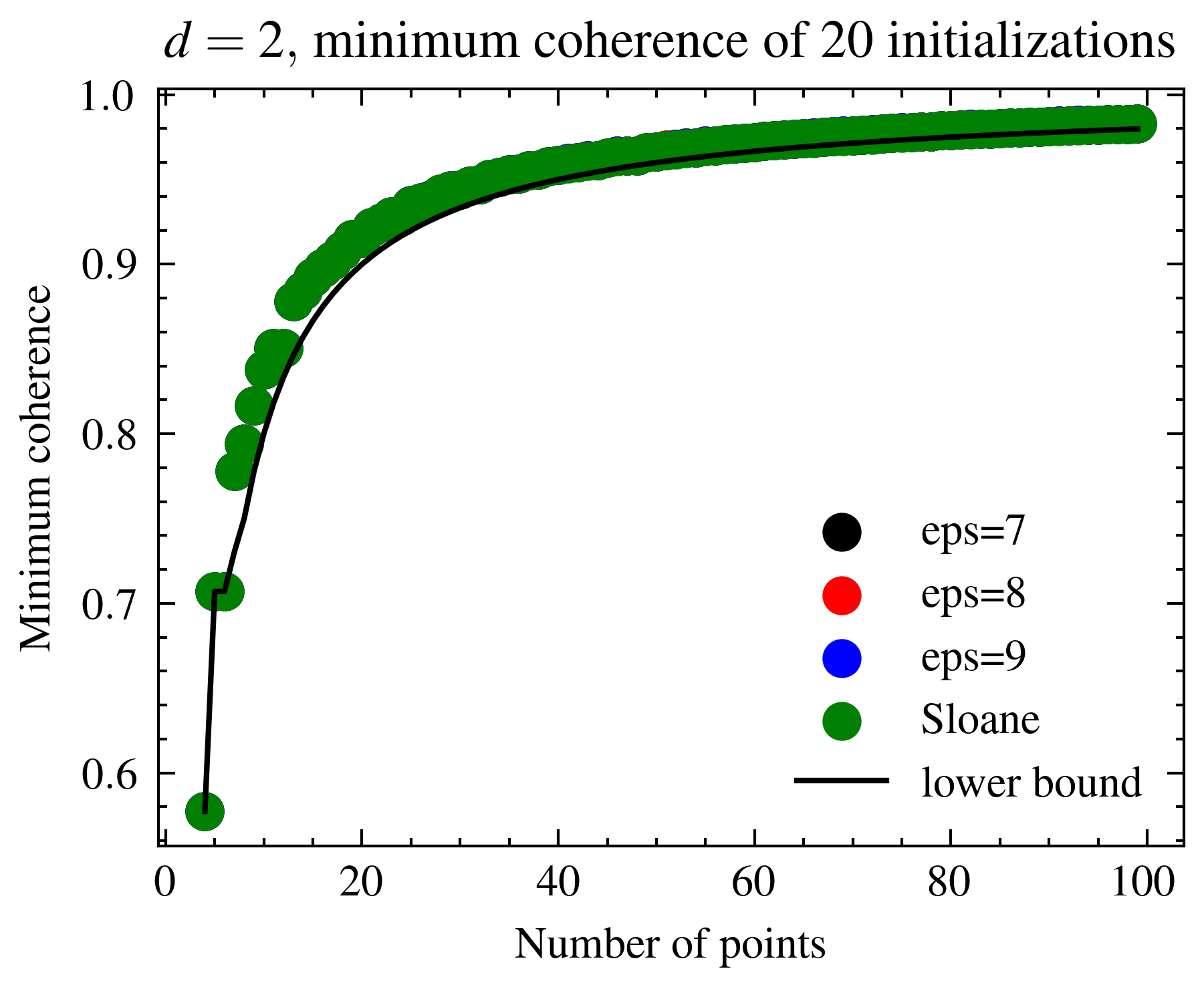} \\ 
\includegraphics[width=.35\textwidth]{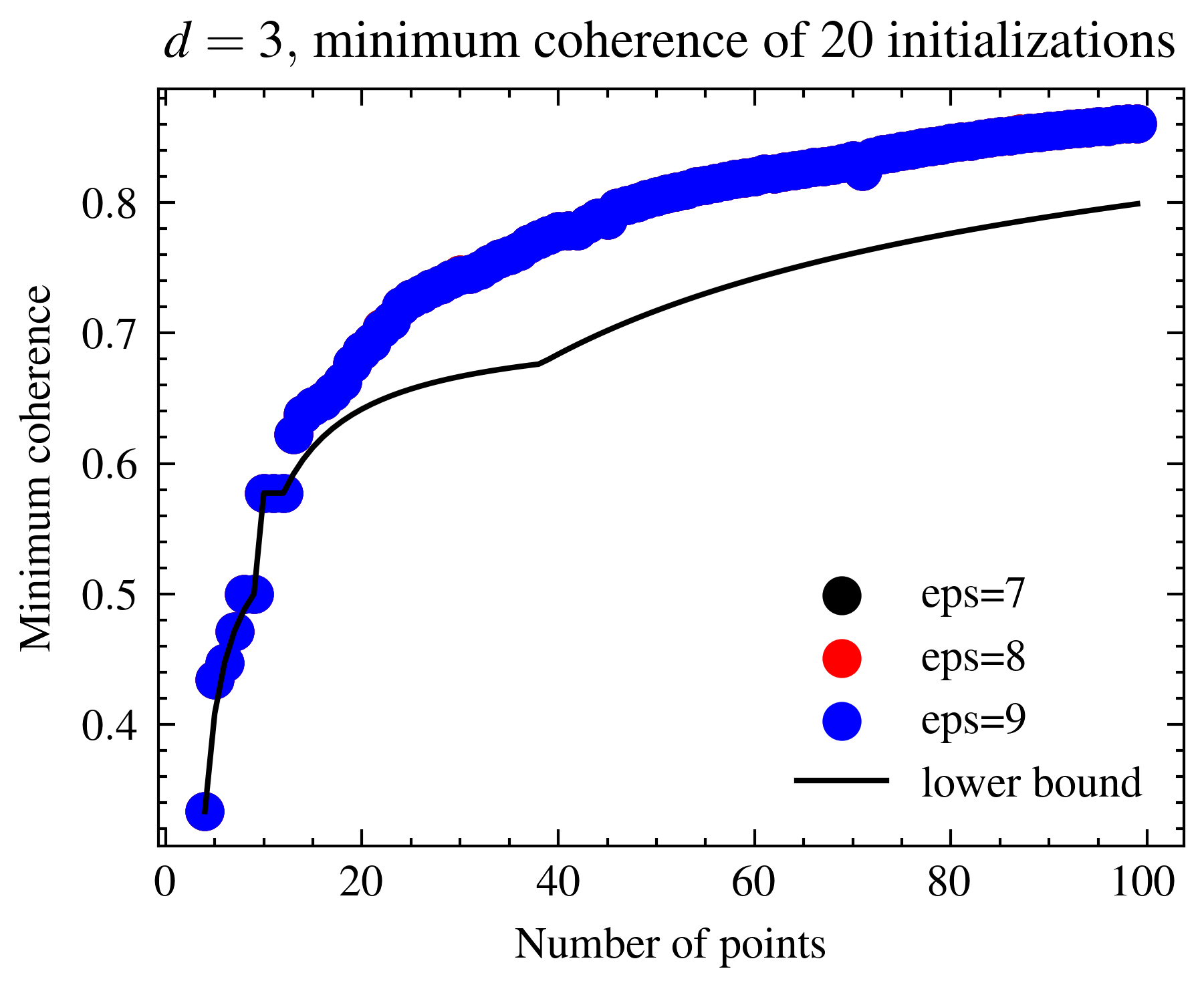} \\
\includegraphics[width=.35\textwidth]{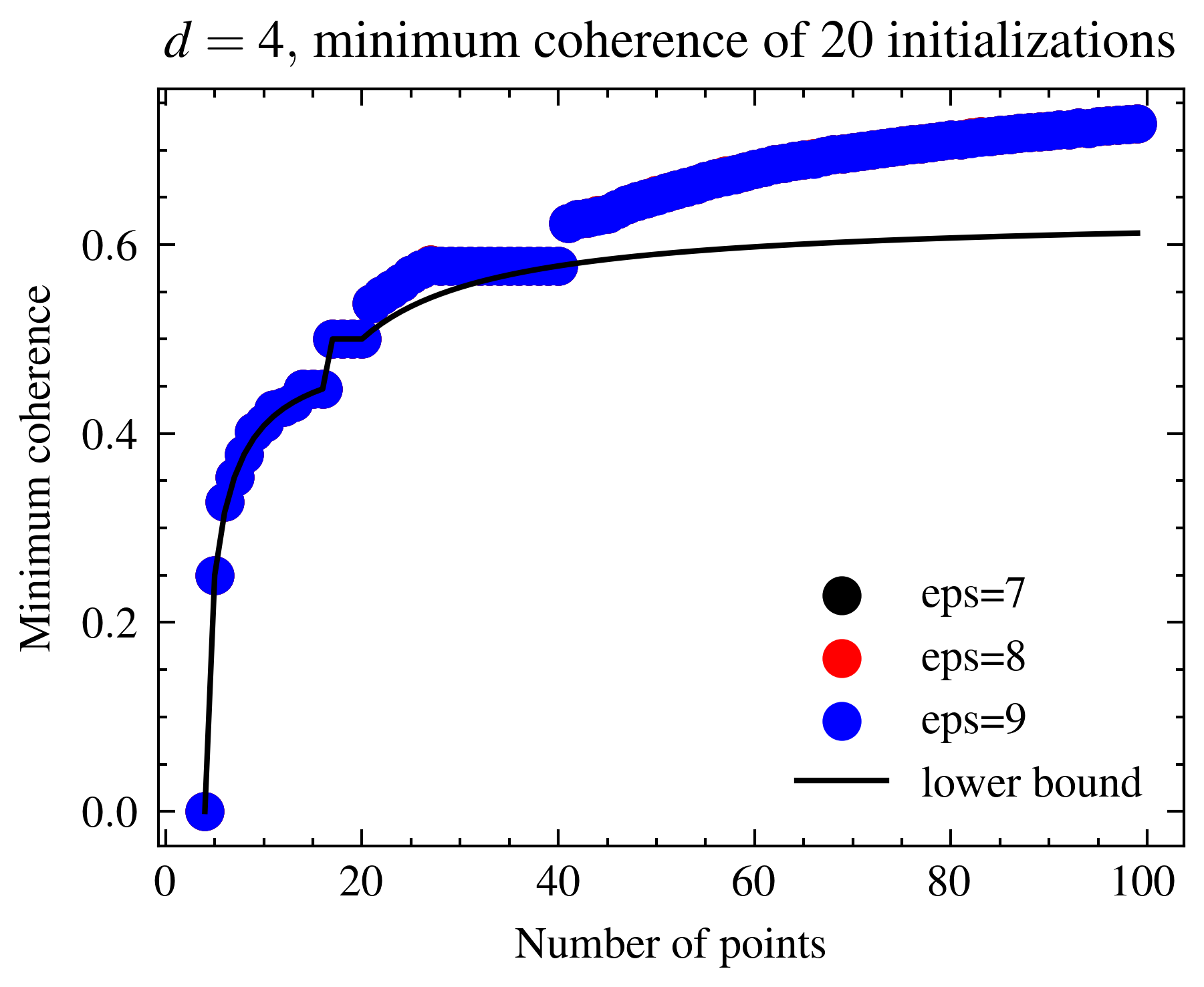} \\}
\caption[LoF entry]{Comparison of lower bounds for $\mu_{N,d}$ against constructive numerical upper bounds for $N\leq 100$ and $\mathbb{C}\mathbb{P}^{1}$, $\mathbb{C}\mathbb{P}^{2}$, and $\mathbb{C}\mathbb{P}^{3}$ obtained via the TRSTMI optimization procedure (minimum values among 20 random intializations for each set of parameters). With log-sum-exp approximation level `eps'$=7,8,9$ for the maximum function (code specific parameter corresponding to $\delta=10^{-7}-10^{-8}$) and the scale taken in the plots, the difference with the values coming from Sloane's database for optimal spherical codes are indistinguishable \cite{sloane}. 

\indent Some successive coherence values are smaller than those preceding. Since removing a point from a larger configuration does not increase the coherence, the plots can be improved in some cases trivially. We show the raw optimized values instead. \label{fig:exp}}

\end{figure}

\section{Acknowledgements}

The first author would like to acknowledge AIM for providing aid to attend the `Discrete Geometry and Automorphic Forms' organized by Henry Cohn and Maryna Viazovska where work on this project began. Portions of this research were conducted with the advanced computing resources provided by Texas A\&M High Performance Research Computing.

The authors would also like to acknowledge helpful correspondence from Henry Cohn, Simon Foucart, John Jasper, Emily J.~King, Dustin G.~Mixon, and Oleksandr Vlasiuk.

%TO-DO:

% add reference to https://ieeexplore.ieee.org/stamp/stamp.jsp?tp=&arnumber=7891884 and paper with alexey

%reword "The choice to optimize successive approximations of the maximum magnititude of off-diagonal inner products was investigated also in [ARU]."

% add reference for code sources

% add some details on the inner loops step choice/number of trials (within experiments and in general)

% Can use something like this to put references on a page
% by themselves when using endfloat and the captionsoff option.
\ifCLASSOPTIONcaptionsoff
  \newpage
\fi

\end{document}